\documentclass[conference]{IEEEtran}
\usepackage{blindtext, graphicx}
\usepackage{amsmath, amsthm, amssymb, algorithmic, array, mathtools}
\usepackage{breqn}
\usepackage{empheq}
\usepackage{verbatim}
\usepackage[noadjust]{cite}
\usepackage{relsize}

\allowdisplaybreaks  

\setlength{\abovecaptionskip}{5pt}

\newtheorem{proposition}{Proposition}

\DeclareMathOperator*{\argmax}{\arg\!\max}



\begin{document}
\title{\huge Learning with Finite Memory for \\ Machine Type Communication\vspace{-.2cm}}

\author{\IEEEauthorblockN{Taehyeun Park$^1$ and Walid Saad$^1$}
\IEEEauthorblockA{\small $^1$Wireless@VT, Bradley Department of Electrical and Computer Engineering, Virginia Tech, Blacksburg, VA, USA,\\ Emails:\{taehyeun, walids\}@vt.edu \vspace{-.5cm}}}

\maketitle

\begin{abstract}
Machine-type devices (MTDs) will lie at the heart of the Internet of Things (IoT) system. A key challenge in such a system is sharing network resources between small MTDs, which have limited memory and computational capabilities. In this paper, a novel learning \emph{with finite memory} framework is proposed to enable MTDs to effectively learn about each others message state, so as to properly adapt their transmission parameters. In particular, an IoT system in which MTDs can transmit both delay tolerant, periodic messages and critical alarm messages is studied. For this model, the characterization of the exponentially growing delay for critical alarm messages and the convergence of the proposed learning framework in an IoT are analyzed. Simulation results show that the delay of critical alarm messages is significantly reduced up to $94\%$ with very minimal memory requirements. The results also show that the proposed learning with finite memory framework is very effective in mitigating the limiting factors of learning that prevent proper learning procedures.
\end{abstract}

\section{Introduction}
The Internet of Things (IoT) is seen as the most promising networking technology of the coming decade \cite{evans}. In the IoT, a massive number of machine type devices (MTDs), such as sensors, wearables, or even mundane objects \cite{saad}, will be endowed with large-scale wireless communication capabilities. Such MTDs will be critical for applications such as smart metering, smart agriculture, and healthcare. However, to enable such massive IoT communication, many challenges must be overcome ranging from network modeling to resource management \cite{ratasuk, popov, latency, dhil1, dhil2}. First and foremost, wireless communication for the IoT must take into account stringent resource constraints, massive scale, and high heterogeneity \cite{ratasuk}. Additionally, it must satisfy strict system requirements, such as enhanced security, high reliability, and efficient resource management \cite{ratasuk}.\\
\indent The MTDs will have different types of messages of varying characterizations, such as time sensitivity, urgency, and delay tolerance \cite{popov}. Some messages will be delay tolerant and not urgent and, thus, they will not require retransmissions or short delays. However, there also are urgent, delay intolerant messages, which require retransmissions and ultra low latency \cite{latency}. Thus, it is critical to develop reliable wireless communication protocols for the IoT that can account for such message heterogeneity as well as for the stringent computational and memory constraints of the MTDs.\\
\indent There are many works on designing, evaluating, and analyzing various wireless communication solutions for the MTDs in an IoT \cite{ratasuk, latency, dhil1, dhil2, popov}. Some of these works, such as in \cite{latency, dhil1, dhil2}, focus on improving latency, reliability, system throughput, and transmit power, which are of critical importance in an IoT. Other works, such as in \cite{popov}, characterize various types of messages present in an IoT. A framework for machine-to-machine (M2M) communication using various multiple access schemes is designed and analyzed to optimize the transmit power \cite{dhil2} and to maximize the system throughput \cite{dhil1}. A wireless communications system for an industrial control application that satisfies both low latency and high reliability requirements is developed in \cite{latency}. In \cite{popov}, the messages used in the smart metering application are classified based on maximum latency, payload size, and frequency, and the integration of smart metering into the current wireless cellular systems is investigated. However, none of this existing body of works considers the heterogeneity in message properties and requirements while improving the performance metrics, such as latency and system throughput. Moreover, works such as in \cite{ratasuk}, \cite{latency}, and \cite{dhil1} assume MTDs to have infinite computational and memory requirements and thus are able to operate similarly to classical human type devices such as smartphones. Naturally, such an assumption does not hold for M2M.\\
\indent To cope with the resource limitations of MTDs with respect to computational power and memory, \emph{sequential learning} models can be used to enable the MTDs to learn to optimize their transmission parameters in a distributed manner. In particular, the authors in \cite{cover} and \cite{fm} introduce a finite memory sequential learning scheme using which a number of agents try to learn an underlying binary state of the world in a sequence, such as in a line, with only finite memory. This is in contrast to conventional sequential learning such as in \cite{infocas} and \cite{herd}, which requires the agents to have a complete knowledge of other agents. In both cases, the agents will have to observe the independent, informative \emph{private signals} about the underlying binary state \cite{cover, fm, infocas, herd}. Although conventional sequential learning, which requires an infinite memory, performs better, the finite memory sequential learning is more appropriate for MTD communication. However, existing related works \cite{cover, fm, infocas, herd} are tailored towards economic or social applications and thus, they cannot be readily applied to an IoT setting. Indeed, such works \cite{cover, fm, infocas, herd} do not consider the resource constraints of MTDs, the system properties of an IoT, and the realistic limitations of communication system.\\
\indent The main contribution of this paper is to introduce a novel learning framework using which the MTDs can learn their transmission parameters, under \emph{stringent memory and computational constraints}. In particular, we consider an IoT system in which memory-constrained, low capability MTDs can transmit both delay-tolerant, periodic messages, and real-time, critical alarm message. For this system, the proposed framework enables the MTDs to autonomously learn to adjust their transmission parameters, such as selected transmission code, for high reliability of delivering critical alarm messages. In particular, using the proposed approach, each MTD can properly learn about the presence of alarm messages with only a partial, finite knowledge of other agents. For the studied framework, we show that it can drastically mitigate the effects of limiting factors of learning, which prevent proper sequential learning in an IoT. Simulation results show that the delay of critical alarm message can be significantly reduced, and the number of MTDs that learn the true, underlying binary state is greatly increased. In particular, the results show that the proposed learning with finite memory framework reduces the delay of urgent alarm messages up to $94$\% with minimal memory, while reducing the system throughput by only $17$\%. To our best knowledge, this paper is the \emph{first to develop a learning with finite memory framework within the context of the Internet of Things and machine type communications.}\\
\indent The rest of this paper is organized as follows. Section II presents the system model. In Section III, we introduce the learning with finite memory framework. Section IV analyzes the simulation results while Section V draws conclusions.\\ \vspace{-.5cm}
\section{System Model}
Consider the uplink of a wireless IoT system consisting of one base station (BS) located at the center of a geographical area in which $N$ MTDs are deployed. For the communication between an MTD and the BS, we consider a time slotted system with time slot duration of $\tau$ using a code division multiple access (CDMA) scheme with code length $l$. CDMA is chosen here since it has been recently shown to be promising for supporting a high IoT message arrive rate in \cite{dhil2}. We focus on uplink transmissions during which the MTDs transmit their data to the BS using one out of a fixed number $C = (2^l - 1)$ of binary spreading codes. As in \cite{dhil2}, we assume that there is no coordination in choosing a random code and thus, the chosen codes are not necessarily orthogonal or unique. If more than one MTD use a given code, it would be impossible for the BS to distinguish the messages and thus, all transmissions using the given code will fail. A transmission is considered to be successful if it is the only MTD using a given code. Furthermore, the messages transmitted to the BS are assumed to be short such that, when successful, the transmission can be completed in one period $\tau$. The MTDs are said to be \emph{active} if they have a message to transmit, otherwise, they will be considered \emph{inactive} and will not transmit in that given slot.\\
\indent We let $S$ be the random variable capturing the number of MTDs that have a successful transmission in a given slot. Then, for some number of active MTDs $n$, $0 \leq n \leq N$, the expected value of $S$ will be:
\begin{equation}
\mathbb{E}\left[S\mid n, C\right] = \sum\limits_{s = 1}^{S_{\textrm{max}}} s \Pr(S = s) \binom{n}{s}, \label{systhru}
\end{equation}
where $S_{\textrm{max}}$ is the maximum number of MTDs with successful transmission given $n$ and $C$. $S_{\textrm{max}}$ is defined as

\begin{equation}
    S_{\textrm{max}}=\left\{
                \begin{array}{ll}
                  n \hfill &\text{if} \  n \leq C,\\
                  (C-1) \hfill &\text{if} \  n > C.
                \end{array}
              \right. \label{smax}
\end{equation}
Since it is difficult to directly compute $\Pr(S = s)$, we will first compute $\Pr(S \geq s)$ and use the fact that 
\begin{equation}
\Pr(I = i) = \Pr(I \geq i) - \sum_{h = 1}^{\infty} \Pr(I = (i + h)) \label{fact1}
\end{equation}
for some discrete random variable $I$ with sample space $\Omega$ and $i \in \Omega$. The probability of having at least $s$ MTDs with successful transmission $\Pr(S \geq s)$ will be:
\begin{subequations}
\begin{empheq}[left={\Pr(S \geq s)=}\empheqlbrace]{alignat=2}
& \textstyle \prod\limits_{j=0}^{s-1} \frac{C-j}{C}\left(\frac{C-s}{C}\right)^{\left(n-s\right)} &\textrm{if} \ s < C, s \leq n, \label{eq1}\\
& \textstyle \prod\limits_{j=0}^{s-1} \frac{C-j}{C} &\textrm{if} \ s = C = n, \label{eq2}\\
& 0  &\textrm{otherwise.} \nonumber
\end{empheq}
\end{subequations}
\eqref{eq1} is the probability that $s$ active MTDs transmit successfully using $s$ codes without repetition while the remaining $(n-s)$ MTDs transmit using the remaining $(C-s)$ codes, some of which may transmit successfully. \eqref{eq2} accounts for the special case in which $s = C = n$. Using \eqref{fact1}, \eqref{eq1}, and \eqref{eq2}, the probability of having $s$ MTDs with successful transmission $\Pr(S = s)$ will be:
\begin{equation}
\Pr(S = s) = \Pr(S \geq s)  - \sum\limits_{i = 1}^{S_{\textrm{max}} - s}\binom{n-s}{i}\Pr(S = s + i),\label{eq3}
\end{equation}
where $\binom{n-s}{i}$ accounts for different combination of choosing $i$ active MTDs, which transmit successfully, out of the remaining $(n-s)$ active MTDs with uncertain transmissions.\\
\indent In our IoT model, we consider different types of messages that the MTDs can transmit. Although the message types are application specific, we consider two general types of IoT messages.
\begin{itemize}
\item \emph{Periodic messages} are update messages from the MTDs to the BS. Examples include meter readings, environment observations, and system status reports. This type of messages is delay tolerant and not of critical importance. Such messages contain time sensitive data that may not be useful if outdated. Therefore, retransmitting the same periodic message will not be allowed after failure. The MTDs will transmit periodic messages every $T\tau$.
\item \emph{Alarm messages} are critical messages triggered by a rarely occurring abnormality. Examples of alarm messages include system failure reports, fire or gas leakage, and power outages. This type of messages is highly delay intolerant and of critical importance. It is crucial to quickly notify the BS of this abnormality and thus, retransmissions are necessary. If there is an abnormality, an MTD will transmit the alarm message to the BS every $\tau$ to constantly update the status of the system.
\end{itemize}

\indent Since abnormalities are rare occurrences, we assume that there will not be more than one alarm message at any given time. The time slot for the first periodic message is uniformly distributed over the set $\{1, 2, \cdots, T\}$ such that approximately an equal number of MTDs will transmit the periodic messages in any given time slot. Therefore, the expected number of active MTDs in a given time slot is $\frac{N}{T}$. If an MTD has an alarm message, the probability of successful transmission $p_s$ for the MTD given that there are $(N-1)$ other MTDs possibly with periodic messages will be:\vspace{-.1cm}
\begin{equation}
p_s = \left(\frac{C-1}{C}\right)^{\frac{(N-1)}{T}}. \label{psucc}
\end{equation}
\indent In our model, delay is defined as the time taken until the first successful transmission and thus, the random variable $D$ for delay can be modeled as a geometric random variable with parameter $p_s$. Therefore, the expected value of delay $D$ until the BS is first notified of an abnormality via an alarm message will be:\vspace{-.1cm}
\begin{equation}
\mathbb{E}[D] = \left(\frac{C}{C-1}\right)^{\frac{(N-1)}{T}}. \label{delay}
\end{equation}
\indent As the number of MTDs, $N$, increases, the probability of successful transmission $p_s$ decreases and the expected delay $\mathbb{E}[D]$ increases exponentially. Therefore, in a massive-scale IoT in which $N$ is very large, the expected delay for alarm messages will be very high, which is inacceptable for the urgent, alarm messages, as it will lead to detrimental effects on the overall IoT system. Hence, when there is an alarm message, the MTDs transmitting the periodic messages must adjust their transmission parameters to satisfy the low delay and the high probability of successful transmission requirements of an alarm message.\\
\indent One simple solution is to reserve one of the $C$ codes only for the alarm messages all the time. However, since the alarm messages are a rare occurrence, this can be a waste of code resources in a resource-limited IoT. Moreover, as seen in \eqref{systhru} and \eqref{smax}, reducing the number of codes from $C$ to $(C-1)$ will result in a significant reduction in system throughput. Therefore, a novel framework to minimize the delay of an alarm message without significantly reducing the system throughput is necessary to improve the overall performance of an IoT.
\section{Learning Algorithm}
The MTDs must adjust their communication parameters, in terms of code selection, whenever there is an alarm message. For instance, by never using a reserved code, the system throughput in \eqref{systhru} will be reduced. Therefore, an ideal solution is to enable the MTDs to adjust their communication parameters only once they become aware of the alarm message. Therefore, there is a need for a learning framework using which MTDs can learn of the presence of an alarm message and, subsequently, adapt their transmissions. Such learning will allow the delay of an alarm message \eqref{delay} to be greatly reduced without jeopardizing network performance, such as system throughput in \eqref{systhru}.\\
\indent Any learning algorithm used by the MTDs to learn the presence of an alarm message must satisfy certain properties to account for the unique nature of an IoT. As one of the purposes of learning is to reduce the delay of an alarm message which is caused by an abnormality, such an algorithm must be decentralized, as the BS itself can have little to no information on the abnormality. Moreover, learning must be properly tailored to the resource constrained nature of MTDs.\\
\indent Consequently, we propose a distributed learning algorithm using which the MTDs can \emph{sequentially learn} whether there is an alarm message based on their own information about the true, underlying state as well as that gathered from other MTDs. Sequential learning is a process in which agents sequentially estimate an underlying, binary state $\theta = \{0, 1\}$ by observing their \emph{private signal} and the estimates of previous agents without a central, omniscient entity \cite{cover, fm}. For our model, $\theta = 1$ means that there is an alarm message, and $\theta = 0$ means that there is no alarm message. The private signal is a random variable, and its distribution depends on the true $\theta$. Moreover, the private signals are informative in estimating the true $\theta$ and are independently, identically distributed for different MTDs. Additionally, we assume that the likelihood ratio of any given private signal cannot be 0 or infinity. In other words, for our model, a private signal cannot fully reveal the true $\theta$ \cite{cover, fm}. Therefore, private signals can be interpreted as independent binary signals with some probability of inferring the true $\theta$. However, the agents do not know the probability of inferring the true $\theta$.
\subsection{Properties of Learning Parameters}
\indent The unique properties of the IoT pose challenges that are not dealt with in the conventional sequential learning literature such as \cite{cover, fm, infocas, herd}. To sequentially learn whether there is an alarm message, the private signal for an MTD is defined as its observation of an abnormality. In the presence of an abnormality, we let $p_{11}$ be the probability of observing the abnormality and, then, $p_{01} = 1-p_{11}$ would be the probability of not observing the abnormality. If there is no abnormality, the probability of observing an abnormality will be $p_{10}$, and the probability of not observing an abnormality will be $p_{00} = (1-p_{10})$. Without loss of generality, we assume $p_{11} > p_{10}$ and thus, $p_{01} < p_{00}$.\\
\indent Although the private signals are assumed to be equally informative in conventional sequential learning \cite{cover, fm}, the private signals for the MTDs are not necessarily equally informative or may not even be informative in some cases. This is because an abnormality can be observed with probability $p_{11}$ only within certain distance $r_d$ from the abnormality. Therefore, the MTDs located beyond a distance $r_d$ from the abnormality will observe the abnormality with probability $p_{10}$, and the private signals for the distant MTDs are not informative about $\theta$.\\
\indent In our model, the learning process of the MTDs is initiated by one of their \emph{neighbors}. The neighbors of an MTD are defined to be the MTDs that are within its communication range $r_c$. After an MTD has learned $\theta$, it will initiate the learning process on its neighbors.\footnote{Here, we assume that MTDs will use a control plane or signaling channel to share observations with their neighbors.}\\
\indent Conventional sequential learning algorithms are typically developed for one dimensional settings, such as a line of agents \cite{cover, fm}. However, the MTDs in an IoT are deployed in a two-dimensional network. This brings in new, unique challenges. For instance, one such challenge is the fact that it is possible that multiple learning sequences may simultaneously arrive at an MTD. We assume that the MTDs cannot determine which learning sequence is the best and thus, an MTD will randomly choose one of the learning sequences to follow. Moreover, it is possible for another learning sequence to arrive at an MTD that already belongs to some other learning sequence. In this case, we assume that an MTD will belong to the first learning sequence that it received.\\
\indent The necessary set of information for learning differs for the different types of sequential learning, and the set of information may include the estimates of $\theta$ of previous agents and the current state of learning process \cite{infocas, cover}. The two main types of sequential learning schemes are \emph{infinite memory} and \emph{finite memory}. The infinite memory sequential learning requires a complete knowledge of previous agents and their estimates of $\theta$, and the necessary set of information will grow infinitely as the learning progresses. The finite memory sequential learning only requires information about a finite set of previous agents and thus, the necessary set of information is finite and fixed. Most of works \cite{infocas, herd} are on infinite memory sequential learning, but the infinite memory sequential learning is not suitable for the resource limited MTDs due to computational limitations, power constraints, and communication restrictions. Therefore, we focus on developing a finite memory sequential learning for the MTDs in an IoT.
\subsection{Finite Memory Sequential Learning}
We propose a finite memory sequential learning algorithm, which extends the framework of \cite{cover} to the IoT case, thus enabling the MTDs to properly learn whether or not there is an alarm message. Our finite memory learning algorithm will nullify the effects of the limited observation range of an abnormality by using $K$ bits of information to learn $\theta$ and thus, the distant MTDs with uninformative private signals can learn to adjust their communication parameters. $K \geq 2$ is a design parameter affecting the convergence of learning, and the case of $K = 2$ will coincide with the finite memory learning algorithm introduced in \cite{cover}. We will focus on analyzing the performance of the proposed learning algorithm in the presence of an abnormality.\\
\indent In our algorithm, we use $K$ bits of information to capture $2$ bits for tracking the learning process and $(K-2)$ bits for the private signals of previous MTDs. The $K$ bits of information for an MTD $i$ are $\{e_{i-K+2}, \cdots, e_{i-1}, T_{i-1}, Q_{i-1}\}$, where $e_j$ is the private signal of MTD $j$, $T_{i-1}$ is current estimate of the true $\theta$, and $Q_{i-1}$ indicates whether $T_{i-1}$ should be changed. $T_{i-1}$ and $Q_{i-1}$ are the $2$ bits for tracking the learning process. After learning, $K$ bits of information that an MTD $i$ will transmit to its neighbors will be $\{e_{i-K+3}, \cdots, e_{i}, T_{i}, Q_{i}\}$, in which the oldest private signal $e_{i-K+2}$ is replaced by the private signal $e_{i}$ of an MTD $i$.\\
\indent Using previous private signals and its own private signal $\{e_{i-K+2}, \cdots, e_i\}$, an MTD will determine its \emph{private belief} by maximum likelihood estimation of $\theta$. The private belief is an estimate of the true $\theta$ only based on the private signals $\{e_{i-K+2}, \cdots, e_i\}$. In our proposed finite memory learning algorithm, the private belief will be used in place of the private signal in \cite{cover}. The private belief $x_i$ for MTD $i$ is:
\begin{align}
x_i &= \argmax\limits_\theta \Pr(e_{i-K+2}, \cdots, e_i \mid \theta),\\
&= \argmax\limits_\theta \prod\limits_{\scriptscriptstyle{j=i-K+2}}^{i} {\textstyle{\Pr(e_j \mid \theta)}} = \argmax\limits_\theta \prod\limits_{\scriptscriptstyle{j=i-K+2}}^{i} p_{\scriptstyle{1\theta}}^{\scriptstyle{e_j}} p_{\scriptstyle{0\theta}}^{\scriptstyle{1-e_j}},\\
&= \argmax\limits_\theta \sum\limits_{j=i-K+2}^{i} e_j \log(p_{\scriptstyle{1\theta}}) + (1-e_j) \log(p_{\scriptstyle{0\theta}}), \label{psml}
\end{align}
since the private signals are independent. If the MTDs know the values of $p_{11}$ and $p_{10}$, $x_i$ can easily be found using \eqref{psml}:
\begin{equation}
    x_{i}=\left\{
                \begin{array}{ll}
                  1 \hfill &\text{if} \ \frac{\sum\limits_{j=i-K+2}^{i} e_j \log(p_{\scriptstyle{11}}) + (1-e_j) \log(p_{\scriptstyle{01}})}{\sum\limits_{j=i-K+2}^{i} e_j \log(p_{\scriptstyle{10}}) + (1-e_j) \log(p_{\scriptstyle{00}})} \leq 1,\\
                  0 \hfill &\text{otherwise},
                \end{array}
              \right. \label{pslrt}
\end{equation}
which is based on likelihood ratio test. However, the MTDs may not know the values of $p_{11}$ and $p_{10}$ and cannot use \eqref{pslrt}. Here, it is assumed that the value of $p_{10}$ is negligible compared to the value of $p_{11}$. In such a case, using \eqref{pslrt}, the private belief $x_i$ of MTD $i$ will be:
\begin{equation}
    x_{i}=\left\{
                \begin{array}{ll}
                  1 \hfill &\text{if} \ \left(\sum\limits_{j=i-K+2}^{i} e_j\right) \geq 1,\\
                  0 \hfill &\text{otherwise}.
                \end{array}
              \right. \label{pslrt2}
\end{equation}
\indent Therefore, the private beliefs of the MTDs that do not know the values of $p_{11}$ and $p_{10}$ will be $1$ if any of the private signals $\{e_{i-K+2}, \cdots, e_i\}$ is 1. For an MTD $i$ within the observation range, the probability that $x_i$ is reflecting the true $\theta$ will be:
\begin{equation}
\Pr(x_i = 1 \mid \theta = 1) = 1 - p_{01}^{K-1} = 1 - (1 - p_{11})^{K-1}, \label{pb11}
\end{equation}
which is higher than the probability $p_{11}$ that a private signal is reflecting the true $\theta$. This implies that the private belief is more informative on the true $\theta$ than the private signal.\\
\indent For the MTDs outside of the observation range, some of their $(K-2)$ previous private signals may stem from the MTDs within the observation range. For an MTD $i$ outside of the observation range, we define $\kappa_i$ to be the number of previous private signals that are from MTDs within the observation range and $\eta_i = (K - 1 - \kappa_i)$ to be the number of private signals that are from the MTDs outside of the observation range. The probability that $x_i$ is reflecting the true $\theta$ for an MTD $i$ outside of the observation range will be:
\begin{equation}
\Pr(x_i = 1 \mid \theta = 1) = 1 - (1-p_{11})^{\kappa_i} (1-p_{10})^{\eta_i}.  \label{ptrueout}
\end{equation}
Under the same assumption about the values of $p_{11}$ and $p_{10}$, \eqref{ptrueout} will simplify to 
\begin{equation}
\Pr(x_i = 1 \mid \theta = 1) = 1 - (1-p_{11})^{\kappa_i}. \label{ptrueout2}
\end{equation}
\begin{proposition}
\normalfont The effective observation range $r_d'$ of an abnormality with $K$-bit finite memory learning is
\begin{equation}
r_d' = r_d + (K-2)r_c. \label{newrd}
\end{equation}
\end{proposition}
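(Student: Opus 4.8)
The plan is to track how far a single informative private signal can propagate through the sliding memory window of a learning sequence, and then convert that number of hops into a physical distance via the communication range $r_c$. First I would recall the memory bookkeeping made explicit just before the statement: the private belief $x_i$ of MTD $i$ is computed from the $K-1$ signals $\{e_{i-K+2}, \dots, e_i\}$, and \eqref{ptrueout2} shows that $x_i$ is informative about $\theta$ precisely when at least one of these signals originates from an MTD inside the true observation range, i.e. when $\kappa_i \geq 1$ (since $\kappa_i = 0$ forces $\Pr(x_i = 1 \mid \theta = 1) \approx 0$ under the assumption that $p_{10}$ is negligible). Hence the effective range $r_d'$ is exactly the largest distance from the abnormality at which an MTD can still have $\kappa_i \geq 1$.

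Next I would pin down the hop count. A signal $e_m$ enters the window when MTD $m$ forms its own estimate, and by the update rule the window slides by exactly one index per MTD, dropping $e_{i-K+2}$ and appending $e_i$ at each step. Consequently $e_m$ lies in the windows of MTDs $m, m+1, \dots, m+K-2$, that is, in $K-1$ consecutive MTDs, with the oldest copy evicted at MTD $m+K-1$. Therefore an informative signal generated by the outermost MTD inside the observation range can still contribute to the private belief of an MTD that is $K-2$ positions later in the learning sequence, but no further.

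Finally I would translate hops into distance. Because consecutive MTDs in a learning sequence are neighbors, each successive MTD lies within $r_c$ of its predecessor, so each of the $K-2$ surviving forwarding hops advances the radial distance from the abnormality by at most $r_c$. Placing the outermost informative MTD at distance $r_d$ and selecting a sequence that moves radially outward with maximal hop length, the farthest MTD that still carries an informative signal sits at distance $r_d + (K-2)r_c$, while any MTD beyond this has $\kappa_i = 0$ and, by \eqref{ptrueout2}, an uninformative private belief; this gives $r_d' = r_d + (K-2)r_c$. The main obstacle I anticipate is the careful off-by-one accounting, separating the $K-1$ signals retained in a window from the $K-2$ forwarding hops they survive, together with justifying the extremal geometric configuration needed to make the bound an exact equality rather than a one-sided inequality.
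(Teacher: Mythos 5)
Your proposal is correct and follows essentially the same argument as the paper: both reduce informativeness of the private belief to the condition $\kappa_i \geq 1$, track the eviction of the last in-range signal from the sliding window over $K-2$ successive MTDs, and convert those hops into distance via $r_c$. If anything, your off-by-one bookkeeping and your remark that equality requires an extremal radially-outward configuration are more careful than the paper's own proof, which treats the bound as exact without comment.
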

\begin{proof}
\indent From \eqref{ptrueout2}, $\kappa_i$ must be at least 1 for having 
\begin{equation}
\Pr(x_i = 1 \mid \theta = 1) > 0, 
\end{equation}
which implies that an MTD $i$ will have an informative private belief. Since the oldest private signal $e_{i-K+2}$ is replaced by the current private signal $e_i$, $\kappa_i$ of an MTD $i$ outside of the observation range $r_d$ will decrement by 1 each time the sequential learning progresses. Therefore, $\kappa_i$ will eventually be zero for some MTDs and thus, $\Pr(x_i = 1 \mid \theta = 1) = 0$ \eqref{ptrueout2} for such MTDs. This implies that the private belief $x_i$ for such MTDs will not be informative and thus, those MTDs will not learn the true $\theta$.\\
\indent Since there are $(K-2)$ bits of previous private signals, $\kappa_i$ will be zero after the sequential learning has progressed $(K-2)$ times. Therefore, the furthest MTD with informative private belief is located $(K-2)r_c$ from the observation range $r_d$.
\end{proof}
\indent The effective observation range $r_d'$ in \eqref{newrd} within which all of the MTDs can learn the true $\theta$ depends on $K$, which implies that $K$ is a design parameter that determines the number of MTDs that will learn the true $\theta$. Moreover, for $K > 2$, $r_d'$ is greater than $r_d$ and thus, $K$-bit finite memory learning can effectively mitigate the limited observation range, which greatly limits the number of MTDs that can learn the presence of an alarm message. Although more MTDs will learn with higher values of $K$, it is not ideal to choose $K$ to be arbitrarily high, because an arbitrarily high value of $K$ will have undesirable effects, such as system throughput reduction.
\section{Simulation Results and Analysis}
\begin{figure}[t]
\centering
\includegraphics[scale=.6]{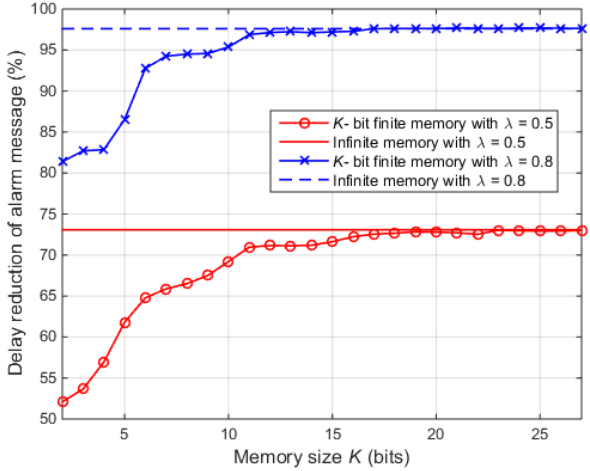}
\caption{Average percentage reduction of delay with respect to the case without any learning.}
\label{figdelay}
\end{figure}
For our simulations, we will focus on analyzing the performance of different learning algorithms under different MTD densities and the convergence of learning algorithms for different bits of memory. The learning algorithms that we will compare are infinite memory sequential learning and $K$-bit finite memory sequential learning for various integer values of $K \geq 2$. We assume that the MTDs are deployed on a square field with dimension $R$ in meters based on Poisson point process with some density parameter $\lambda$ in number of MTDs per square meter. Therefore, the expected number of MTDs $\mathbb{E}\left[N\right]$ is $R^2 \lambda$.\\
\indent In our simulations, we set the length of code $l$ to 4 bits and, the number of available codes $C$ is $15$. The time slot duration $\tau$ is 1 second, and the period $T\tau$ of periodic messages is $20$ seconds. The communication range $r_c$ between the MTDs is $2$ meters, while the observation range $r_d$ is $10$ meters. The dimension $R$ of the square field is $50$ meters. The value of $p_{11}$ is 0.80 and the value of $p_{10}$ is 0.001. We analyze the performance of learning algorithms with low MTD density of $\lambda = 0.5$ and high MTD density of $\lambda = 0.8$.\\
\indent Figure \ref{figdelay} shows the average percentage reduction of delay of alarm message with respect to expected delay without any learning algorithm as $K$ and $\lambda$ vary. The delay in \eqref{delay} of an alarm message is of critical interest, because an abnormality can be dealt with faster if the BS is notified quicker. Without any learning algorithm, the expected delays are $74.3$ seconds when $\lambda = 0.5$ and $988.2$ seconds when $\lambda = 0.8$. For both cases of $\lambda$, we can see significant percentage reductions of delay as $K$ increases, but the learning algorithms are more effective in reducing the delay when $\lambda$ is high. With $7$-bit finite memory learning, the delay is reduced by $66$\% when $\lambda = 0.5$ and $94$\% when $\lambda = 0.8$. Higher values of $K$ will reduce the delay more, but will also negatively affect the system throughput in \eqref{systhru} more as well.\\
\begin{figure}[t]
\centering
\includegraphics[scale=.6]{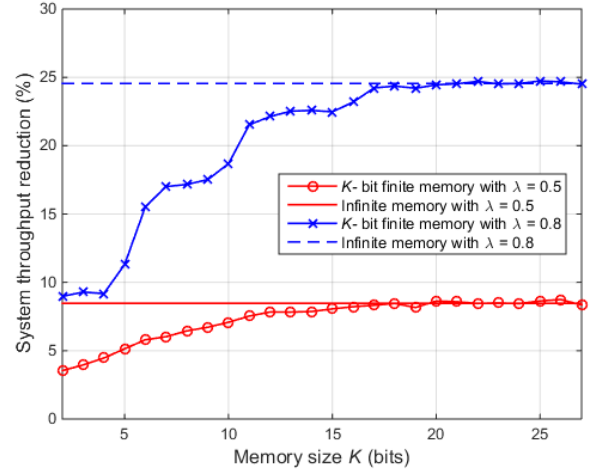}
\caption{Average percentage reduction of throughput with respect to the case with no learning.}
\label{figsysthru}
\end{figure}
\indent Figure \ref{figsysthru} shows the average percentage reduction of system throughput with respect to system throughput without any learning algorithm as $K$ and $\lambda$ vary. As $K$ increases, $K$-bit finite memory learning reduces the system throughput \eqref{systhru} more that the system throughput is reduced by $10$\% when $\lambda = 0.5$ and by $25$\% when $\lambda = 0.8$ for high values of $K$. Therefore, as shown in Figure \ref{delay} and Figure \ref{systhru}, there is a clear tradeoff between the percentage reductions of delay and system throughput that the higher percentage reduction in delay will lead to higher percentage reduction in system throughput. As $K$ is a design parameter, the value of $K$ can be chosen considering both percentage reductions in delay and system throughput to optimize the system performance.\\ 
\begin{figure}
\centering
\includegraphics[scale=.65]{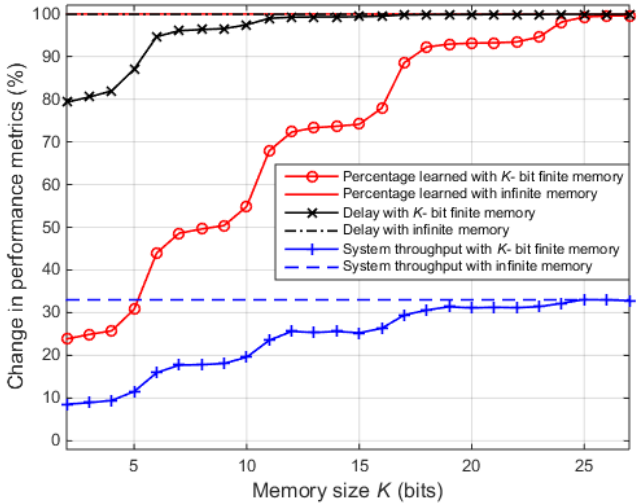}
\caption{Percentage learned, delay, and system throughput vs. memory size, for $\lambda = 0.8$}\vspace{-.4cm}
\label{complearned}
\end{figure}
\indent Figure \ref{complearned} shows the average percentage of MTDs that will learn the true $\theta$, the average percentage reduction of alarm message delay, and the average percentage reduction of system throughput as $K$ varies after all MTDs had chance to learn the true $\theta$. All percentages are measured with respect to the case with no learning. The number of MTDs that will learn the true $\theta$ increases as $K$ increases, because more MTDs will have informative private belief about true $\theta$ as the effective observation range $r_d'$ in \eqref{newrd}, which is an increasing function of $K$, increases. The percentage of MTDs that learned the true $\theta$ resembles a step function, because the learning algorithm has alternating phases to check whether the current estimate of $\theta$ should be $\theta = 0$ or $\theta = 1$.\\
\indent As more MTDs learn the presence of an alarm message, the alarm message delay will decrease, because an alarm message is more likely to successfully transmit as more MTDs learn to not transmit using the reserved code. Additionally, more MTDs with periodic messages will transmit with fewer available codes as more MTDs learn the presence of an alarm message. Since the periodic messages are less likely to successfully transmit with fewer available codes, the system throughput will also decrease. For $K \geq 11$, it is interesting to note that the system throughput reduces significant, while the alarm message delay does not improve noticeably. Similar to Figure \ref{figdelay} and Figure \ref{figsysthru}, there is a clear tradeoff between the percentage reductions in delay and system throughput. The optimal value of $K$ will depend on the required latency of an alarm message and throughput of the periodic messages.
\section{Conclusion}
In this paper, we have proposed a novel approach for enabling MTDs to communicate in the presence of both periodic and critical alarm messages. In particular, we have introduced a framework that enables MTDs with finite memory to learn whether or not rarely occurring alarm messages are in the system and, thus, adaptively adjust their transmission parameters. We have shown that such learning with finite memory is very well suited for the resource limited MTDs and also very effective against the unique limiting factors of learning of an IoT, such as limited observation range. Simulation results have shown that the delay of an alarm message, the system throughput, and the number of MTDs that learned the true $\theta$ are functions of $K$. Furthermore, the delay of an alarm message and the system throughput both decrease as $K$ increases and thus, $K$ must be chosen carefully to satisfy the given required performance metrics. One important future direction is to optimize the value of $K$ to meet specific system parameters of interest, such as system throughput and latency, in the presence of multiple types of messages.
\section*{Acknowledgment}
This research was supported by the U.S. Office of Naval Research (ONR) under Grant N00014-15-1-2709.


\begin{thebibliography}{99}
\bibitem{evans} D. Evans, ``The Internet of Things: How the Next Evolution of the Internet is Changing Everything," Cisco Internet Business Solutions Group, San Jose, CA, Apr. 2011.
\bibitem{saad} Z. Dawy, W. Saad, A. Ghosh, J. G. Andrews, and E. Yaacoub, ``Towards Massive Machine Type Cellular Communications," \textit{IEEE Wireless Communications Magazine}, to appear, 2016. Available: http://arxiv.org/ftp/arxiv/papers/1512/1512.03452.pdf.
\bibitem{ratasuk} R. Ratasuk, N. Mangalvedhe, A. Ghosh, and B. Vejlgaard, ``Narrowband LTE-M System for M2M Communication," in \textit{Proc. IEEE Vehicular Technology Conference}, Vancouver, Canada, Sept. 2014, pp. 1-5.
\bibitem{popov} J.J. Nielsen, G.C. Madueno, N.K. Pratas, R.B. Sørensen, C. Stefanovic\', and P. Popovski, ``What can wireless cellular technologies do about the upcoming smart metering traffic?" \textit{IEEE Communications Magazine}, vol. 53, no. 9, pp. 41-47, Sept. 2015.
\bibitem{latency} M. Weiner, M. Jorgovanovic, A. Sahai, and B. Nikolic\', ``Design of a Low-Latency, High-Reliability Wireless Communication System for Control Applications," in \textit{Proc. IEEE International Conference on Communications (ICC)}, Sydney, Australia, Jun. 2014, pp. 3829-3835.
\bibitem{dhil1} H.S. Dhillon, H.C. Huang, H. Viswanathan, and R.A. Valenzuela, ``Fundamentals of Throughput Maximization with Random Arrivals for M2M Communications," \textit{IEEE Transactions on Communications}, vol. 62, no. 11, pp. 4094-4109, Nov. 2014.
\bibitem{dhil2} H.S. Dhillon, H.C. Huang, H. Viswanathan, and R.A. Valenzuela, ``Power-Efficient System Design for Cellular-Based Machine-to-Machine Communications," \textit{IEEE Transactions on Wireless Communications}, vol. 12, no. 11, pp. 5740-5753, Nov. 2013.
\bibitem{cover} T.M. Cover, ``Hypothesis Testing with Finite Statistics," \textit{The Annals of Mathematical Statistics}, vol. 40, no. 3, pp. 828-835, Jun. 1969.
\bibitem{fm} K. Drakopoulos, A. Ozdaglar, and J. N. Tsitsiklis, ``On Learning with Finite Memory," \textit{IEEE Transactions on Information Theory}, vol. 59, no. 10, pp. 6859-6872, May 2013.
\bibitem{infocas} S. Bikhchandani, D. Hirshleifer, and I. Welch, ``A Theory of Fads, Fashion, Custom, and Cultural Changeas Informational Cascade," \textit{Journal of Political Economy}, vol. 100, no. 5, pp. 992-1026, Oct. 1992.
\bibitem{herd} A. V. Banerjee, ``A Simple Model of Herd Behavior," \textit{The Quarterly Journal of Economics}, vol. 107, no. 3, pp. 797-817, Aug. 1992.
\end{thebibliography}
\end{document}